 \newif\ifNoRemark
    \def\addtheorem#1#2#3#4{ 
    \ifthenelse{\expandafter\isundefined\csname the#2\endcsname}{\newcounter{#2}}{}
    \newenvironment{#1}[1][\global\NoRemarktrue]
     {\par\addvspace{2mm}\noindent
       \refstepcounter{#2}{\bf #3~\csname the#2\endcsname
      \vphantom{##1}\ifNoRemark.\ \else\ (##1).\fi}\begingroup #4}%
     {\endgroup\par\addvspace{1mm}\global\NoRemarkfalse}
    \expandafter\newcommand\csname b#1\endcsname{\begin{#1}}
    \expandafter\newcommand\csname e#1\endcsname{\end{#1}}
    }
\newtheorem{theorem}{Theorem}
\newtheorem{lemma}{Lemma}
\newtheorem{proposition}{Proposition}
\newtheorem{corollary}{Corollary}
\begin{document}

\title{Upper bounds on the numbers of binary plateaued and bent
functions }

\author{ V. N. Potapov\\
 Sobolev Institute of Mathematics\\
{ vpotapov@math.nsc.ru}}

\maketitle

\begin{abstract}
We prove that the logarithm of the number of binary $n$-variable
bent functions is asymptotically  less than $\frac{11}{32}2^n$ as
$n\rightarrow\infty$. We also prove an asymptotic upper bound on the
number of $s$-plateaued functions.

Keywords: Boolean function, Walsh--Hadamard transform, plateaued
function, bent function, upper bound

 MSC[2020] 94D10, 94A60,  06E30
\end{abstract}

\section{Introduction}

Bent functions are maximally nonlinear Boolean functions with an
even number of variables and are optimal combinatorial objects. In
cryptography, bent functions are used in  block ciphers. Moreover,
bent functions have many theoretical applications  in discrete
mathematics. Full classification of bent functions would be
 useful for combinatorics and cryptography. But constructive
classifications and enumerations of bent functions in $n$ variables
are likely impossible for large $n$. The numbers of $n$-variable
bent functions are only known for $n\leq 8$. There exist  $8$ bent
functions  for $n=2$, $896$ for $n=4$, approximately $2^{32.3}$ for
$n=6$ and  $2^{106.3}$ for $n=8$ \cite{LL}. Thus, lower and upper
asymptotic bounds of the number of bent functions are very
interesting (see \cite[Chapter 4.4]{Mes0}, \cite[Chapter 13]{Tok0}).

Currently, there exists a drastic gap between the upper and lower
bounds on the number of bent functions. Let
$\mathcal{N}(n)=\log_2|\mathcal{B}(n)|$, where $\mathcal{B}(n)$ is
the set  of Boolean bent functions in $n$ variables.
 New   asymptotic  lower
bounds  on the number of bent functions is recently proven in the
binary case \cite{PTT}  and  in the case of  finite fields of odd
characteristic \cite{PF}. In the binary case it is
$\mathcal{N}(n)\geq \frac{3n}{4}2^{n/2}(1+o(1))$ as $n$ is even and
$n\rightarrow\infty$.   This bound is slightly better than the bound
$\mathcal{N}(n)\geq \frac{n}{2}2^{n/2}(1+o(1))$ based on the
Maiorana--McFarland construction of bent functions. It is well known
(see e.g.\,\cite{Carlet}, \cite{CarMes}, \cite{Mes0}) that the
algebraic degree of a binary bent function in $n$ variables is not
greater than $n/2$. Therefore, $\mathcal{N}(n)\leq
{\sum\limits_{i=0}^{n/2} {n \choose i}}= 2^{n-1}+ \frac12{n \choose
n/2} $. There are some nontrivial upper bounds of
$|\mathcal{B}(n)|$. But, bounds from \cite{CK} and \cite{Ag1} are of
the same type $\mathcal{N}(n)\leq 2^{n-1}(1+o(1))$ up to the
asymptotic of logarithm.
 An
 upper bound $\mathcal{N}(n)\leq \frac{3}{4}\cdot2^{n-1}(1+o(1))$
 is proven in \cite{P21}. In this paper we improved latter
  bound and obtained that $\mathcal{N}(n)< \frac{11}{16}\cdot2^{n-1}(1+o(1))$ (Theorem \ref{number_bent}).
 Note that
Tokareva's conjecture (see \cite{Tok} and \cite{Mes0}) on the
decomposition of Boolean functions into sums of bent functions
implies that $\mathcal{N}(n)\geq \frac{1}{2}2^{n-1}+\frac14{n
\choose n/2}$.

The new bound mentioned above  is asymptotic. One can use the
proposed method to find a non-asymptotic upper bound on the number
of bent functions. But for the fixed $n=6$ and $n=8$ such bound is
greater than $\frac{11}{32}\cdot2^{n}$ approximately twice. The main
reason of this difference lies in the cardinality of the middle
layer of the $n$-dimensional Boolean cube. This cardinality is
asymptotically negligible, but that is not in the case for $n=6$ and
$n=8$.

The new upper bound on the number of bent functions is based on new
asymptotic  upper bound on the number of $s$-plateaued Boolean
functions in $n$ variables. $s$-Plateaued functions are a
generalization of bent functions which are the same as $0$-plateaued
functions. Plateaued functions can combine important cryptographic
properties of nonlinearity and correlation immunity (see
e.g.\,\cite{Tar}). Let $\mathcal{N}(n,s)$ be the logarithm of the
number of such functions.  In Theorem \ref{number_plat} (a) we prove
that
$$\mathcal{N}(n,s)\leq
\left(b\left(n-2,\left\lceil\frac{n-s}{2}\right\rceil+1\right)\left(1+\frac{3}{8}\log
6\right)+2^{n-2}\left(\hbar\left(\frac{1}{2^s}\right)+\frac{1}{2^s}\right)\right)(1+o(1))$$
as $n\rightarrow\infty$, where $\hbar$ is  Shannon's entropy
function and $b(n,r)$ is the cardinality of balls  with radius $r$
in the $n$-dimensional Boolean cube. This  bound is not tight but
sufficient to disprove the following conjecture on derivatives of
bent functions. Tokareva (see  \cite{Tok16, Shap}) conjectured that
each balanced Boolean function $f$ in an even number of variables
$n$ of algebraic degree at most $n/2 -1$  is a derivative of some
bent function if $f (x) = f (x \oplus y)$ for every vector $x$ and
some nonzero vector $y$. It is true for $n\leq6$
\cite[Theorem~4]{Shap} but based on  Theorem \ref{number_plat} (b)
it is proved  that this conjecture is false when $n$ is large enough
(see \cite{Pot23}).

 The method of proving the above bounds implies a
storage algorithm for bent and plateaued functions. Essentially we
calculate the number of bits needed to define all the values of the
function.
 A quantity of
bits required by the algorithm  is equal to the corresponding upper
bound $\mathcal{N}(n,s)$. In practice we need bent and plateaued
functions with some special properties. See, for example, some
recent constructions from \cite{JL}, \cite{Li} and \cite{Zhang}.
Therefore, methods of compact storage of $n$-variable bent and
plateaued functions may be useful for large $n$.

\section{Fourier transform}

Let $\mathbb{F}= \{0, 1\}$. The set $\mathbb{F}^n$ is called a
Boolean hypercube (or a Boolean $n$-cube). $\mathbb{F}^n$ equipped
with  coordinate-wise modulo 2 addition $\oplus$ can be considered
as an $n$-dimensional vector space. Functions
 $\phi_x(y)=(-1)^{\langle x,y\rangle}$ are called characters. Here $\langle
x,y\rangle=x_1y_1\oplus\dots\oplus x_ny_n$ is the inner product. Let
$G$ be a function that  maps from the Boolean hypercube to real
numbers. The Fourier transform of $G$ is defined by the formula
$\widehat{G}(y)=(G,\phi_y)=\sum\limits _{x\in
\mathbb{F}^n}G(x)(-1)^{ \langle x,y\rangle}$, i.e., $\widehat{G}(y)$
are the coefficients of the expansion of $G$ with respect to the
basis of characters. We can define the Walsh--Hadamard transform of
a Boolean function $f:\mathbb{F}^n\rightarrow \mathbb{F}$  by the
formula $W_f(y)=\widehat{(-1)^f}(y)$, i.e.,
$$W_f(y)=\sum\limits_{x\in \mathbb{F}^n}(-1)^{f(x)\oplus \langle
x,y\rangle}.$$ A Boolean function $b$ is called a bent function if
$W_b(y)=\pm 2^{n/2}$ for all $y\in \mathbb{F}^n$.  So,
$W_b=2^{n/2}(-1)^g$ for some Boolean function $g$. It is well known
(see  e.g. \cite{Carlet, Mes0}) that $g$ is also a bent function and
$W_g=2^{n/2}(-1)^f$. Such bent functions $b$ and $g$ are called
{dual}.
 It is easy to see that $n$-variable bent
functions  exist only if $n$ is even. A Boolean function $p$ is
called an $s$-plateaued function if $W_p(y)=\pm 2^{(n+s)/2}$  or
$W_p(y)=0$ for all $y\in \mathbb{F}^n$. So, bent functions are
$0$-plateaued functions. $1$-Plateaued functions are called
near-bent.

From Parseval's identity  $$\sum\limits_{y\in \mathbb{F}^n}
\widehat{H}^2(y)= 2^{n}\sum\limits_{x\in \mathbb{F}^n}H^2(x),$$
where $H:\mathbb{F}^n\rightarrow \mathbb{R}$, it follows
straightforwardly:
\begin{proposition}\label{bentpla00}
For every  $s$-plateaued function,  a part of nonzero values of its
Walsh--Hadamard transform is equal to $\frac{1}{2^s}$.
\end{proposition}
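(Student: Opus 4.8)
The plan is to apply Parseval's identity directly to the weight function $H=(-1)^p$. The first observation is that, since $p$ is Boolean, $H^2(x)=1$ for every $x\in\mathbb{F}^n$, so $\sum_{x}H^2(x)=2^n$ and the identity collapses to
\[
\sum_{y\in\mathbb{F}^n}W_p^2(y)=2^{2n}.
\]
This is the only place where the Boolean (as opposed to general real-valued) nature of the function is used, and it is what makes the right-hand side a clean power of two.

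Next I would exploit the defining property of an $s$-plateaued function: every value $W_p(y)$ is either $0$ or $\pm 2^{(n+s)/2}$, so each nonzero term contributes exactly $2^{n+s}$ to the left-hand sum. Writing $N$ for the number of $y$ with $W_p(y)\neq 0$, the displayed identity becomes $N\cdot 2^{n+s}=2^{2n}$, whence $N=2^{n-s}$. Dividing by the total number $2^n$ of points then yields the proportion $N/2^n=2^{-s}=1/2^s$ of nonzero Walsh--Hadamard coefficients, which is precisely the claim.

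I do not expect any genuine obstacle here; the entire content is the remark that $H^2\equiv 1$ for a Boolean function, and the only thing left to verify is the arithmetic $2^{2n}/2^{n+s}=2^{n-s}$. One might add, as a sanity check built into the argument, that the count $N=2^{n-s}$ must be a nonnegative integer, which forces $0\le s\le n$; this is exactly the implicit constraint under which the notion of an $s$-plateaued function is consistent, and it falls out of the same computation at no extra cost.
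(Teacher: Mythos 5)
Your proof is correct and is exactly the argument the paper intends: the paper derives this proposition directly from Parseval's identity (stated just above it), which with $H=(-1)^p$ gives $\sum_y W_p^2(y)=2^{2n}$, so the number of nonzero coefficients, each contributing $2^{n+s}$, is $2^{n-s}$, i.e.\ a fraction $1/2^s$. Your added remark that this forces $0\le s\le n$ (and, one could add, $n+s$ even) is a sensible sanity check not spelled out in the paper.
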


It is well known (see e.g. \cite{Carlet},\cite{Tsf}) that for any
function $H,G:\mathbb{F}^n\rightarrow \mathbb{R}$ it holds
$$\widehat{H*G}= {\widehat{H}\cdot\widehat{ G}} \qquad {\rm
and}\qquad \widehat{(\widehat{H})}=2^nH,$$ where
$H*G(z)=\sum\limits_{x\in \mathbb{F}^n}H(x)G(z\oplus x)$ is a
convolution. Consequently, it holds
\begin{equation}\label{equpperbent11}
2^{n}H*G= \widehat{\widehat{H}\cdot\widehat{ G}} \ \mbox{and}\
\widehat{H}*\widehat{ G}= 2^{n}\widehat{H\cdot G}.
\end{equation}
 Let $\Gamma$ be a subspace of the hypercube.
 Denote by
   $\Gamma^\perp$  a
dual subspace, i.e., $\Gamma^\perp=\{y\in \mathbb{F}^n : \forall
x\in \Gamma, \langle x,y\rangle=0 \}.$ Let ${\bf1}_S$ be an
indicator function for $S\subset\mathbb{F}^n $. It is easy to see
that for every subspace $\Gamma$ it holds
$\widehat{{\bf1}_{\Gamma^\perp}}=2^{n-\mathrm{dim}\,\Gamma}{\bf1}_{\Gamma}$.
By (\ref{equpperbent11}) we have
\begin{equation}\label{equpperbent1}
{H}*{\bf1}_{\Gamma^\perp}=2^{-\mathrm{dim}\,\Gamma}\widehat{\widehat{H}\cdot\mathbf{1}_{\Gamma}}
\end{equation}
for any subspace  $\Gamma\subset \mathbb{F}^n$. When we substitute
vector $a\in \mathbb{F}^n$ in (\ref{equpperbent1}) we obtain
\begin{equation}\label{equpperbent10}
\sum\limits_{x\in a\oplus\Gamma^\perp}H(x)
=2^{-\mathrm{dim}\,\Gamma}{\sum\limits_{y\in
\Gamma}\widehat{H}(y)(-1)^{y\oplus a}}.
\end{equation}

Denote by $\mathrm{supp}(G)=\{x\in \mathbb{F}^n : G(x)\neq 0\}$ the
support of $G$. Without any confusion, we will consider the support
of a real-valued function as a Boolean function. We need the
following known property of bent functions (see e.g.\,\cite{Mes0}).

\begin{proposition}\label{bentpla1}
Let $f$ be an $n$-variable bent function and let $\Gamma$ be a
hyperplane obtained by fixing one coordinate to $0$. Consider
$h={f}\cdot{\bf1}_{\Gamma}$ as an $(n-1)$-variable function. Then
$h$ is a  $1$-plateaued function.
\end{proposition}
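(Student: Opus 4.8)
The plan is to compute the Walsh--Hadamard transform of $h$, viewed on $\mathbb{F}^{n-1}$, directly from that of the ambient bent function $f$, and then read off the admissible values from the bent condition $W_f=\pm 2^{n/2}$. Without loss of generality I take the fixed coordinate to be the last one, so $\Gamma=\{x\in\mathbb{F}^n:x_n=0\}$ is an $(n-1)$-dimensional subspace; identifying $\Gamma$ with $\mathbb{F}^{n-1}$ through the first $n-1$ coordinates, the product $f\cdot\mathbf{1}_\Gamma$ becomes the restriction $h(x')=f(x',0)$ for $x'\in\mathbb{F}^{n-1}$. It is convenient to introduce simultaneously the complementary restriction $h_1(x')=f(x',1)$.

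The key step is to split the defining sum for $W_f(u,c)$, with $u\in\mathbb{F}^{n-1}$ and $c\in\mathbb{F}$, according to the value $x_n\in\{0,1\}$. Writing $x=(x',x_n)$ and $\langle x,(u,c)\rangle=\langle x',u\rangle\oplus x_nc$, one obtains $W_f(u,c)=W_h(u)+(-1)^c\,W_{h_1}(u)$. Taking $c=0$ and $c=1$ and adding yields $W_h(u)=\tfrac12\bigl(W_f(u,0)+W_f(u,1)\bigr)$. Since $f$ is bent, both $W_f(u,0)$ and $W_f(u,1)$ lie in $\{+2^{n/2},-2^{n/2}\}$, so their average is $\pm 2^{n/2}$ when the two signs coincide and $0$ when they differ; no other value is possible.

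It then remains only to match this three-valued spectrum to the definition of an $s$-plateaued function for the $(n-1)$-variable function $h$. Here the required magnitude is $2^{((n-1)+s)/2}$, which for $s=1$ equals exactly $2^{n/2}$; thus $W_h\in\{0,\pm 2^{n/2}\}=\{0,\pm 2^{((n-1)+1)/2}\}$ is precisely the near-bent (that is, $1$-plateaued) spectrum. I do not anticipate a substantive obstacle: the argument reduces to a one-line transform identity combined with the bent hypothesis. The only place demanding attention is this dimension bookkeeping --- confirming that the exponent $(n'+s)/2$ with $n'=n-1$ and $s=1$ reproduces the magnitude $2^{n/2}$ inherited from $f$, so that no off-by-one creeps into the plateau level.
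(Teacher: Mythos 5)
Your argument is correct, including the dimension bookkeeping: identifying $\Gamma$ with $\mathbb{F}^{n-1}$, splitting the Walsh sum over $x_n\in\{0,1\}$ gives $W_f(u,c)=W_h(u)+(-1)^c W_{h_1}(u)$, hence $W_h(u)=\frac12\bigl(W_f(u,0)+W_f(u,1)\bigr)\in\{0,\pm2^{n/2}\}$, and $2^{n/2}=2^{((n-1)+1)/2}$ is exactly the $1$-plateaued level for an $(n-1)$-variable function. The paper reaches the same conclusion by a different route: it introduces the dual bent function $g$ with $W_f=2^{n/2}(-1)^g$ and applies its convolution identity (\ref{equpperbent1}) to obtain $2^{\frac{n}{2}-1}(-1)^g*\mathbf{1}_{\Gamma^\perp}=\widehat{(-1)^{f}\cdot\mathbf{1}_{\Gamma}}$, then observes that $(-1)^{g(x)}+(-1)^{g(x\oplus a)}\in\{0,\pm2\}$. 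The two proofs are dual faces of a single computation: substituting $W_f=2^{n/2}(-1)^g$ into your average $\frac12(W_f(u,0)+W_f(u,1))$ produces precisely the paper's quantity $2^{\frac{n}{2}-1}\bigl((-1)^{g(x)}+(-1)^{g(x\oplus a)}\bigr)$, and your same-sign/opposite-sign dichotomy is the paper's $\pm2$-or-$0$ dichotomy. What your version buys is economy: it uses only the definition of bentness, with no appeal to duality, Fourier inversion, or the convolution formalism. What the paper's version buys is integration with its broader machinery: identity (\ref{equpperbent1}) is reused in Propositions \ref{cor:Sar} and \ref{bentpla19}, and making the dual $g$ explicit is what later yields the degree bound on the Walsh support (Proposition \ref{bentpla19}, via $h(x)=g(x)\oplus g(x\oplus a)\oplus 1$) and the derivative structure exploited in Theorem \ref{number_bent} --- structural information that your primal-side computation leaves implicit.
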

\begin{proof}
By the definition we have $2^{n/2}(-1)^f= \widehat{(-1)^g}$,  where
a bent function $g$ is  dual of $f$. By (\ref{equpperbent1})
$$2^{\frac{n}{2}-1}(-1)^g*{\bf1}_{\Gamma^\perp}=\widehat{(-1)^{f}\cdot{\bf1}_{\Gamma}}.$$

For a nonzero $a\in\Gamma^\perp$ and any $x\in \Gamma$ we obtain
$$(-1)^g*{\mathbf 1}_{\Gamma^\perp}(x)=(-1)^g(x)+(-1)^g(x\oplus a)=\pm 2\ {\rm
or }\ 0.$$ Then
$\widehat{(-1)^{f}\cdot{\bf1}_{\Gamma}}(x)=\pm2^{\frac{(n-1)+1}{2}}$
or $0$.  It is easy to see that
$\widehat{(-1)^h}(x)=\widehat{(-1)^{f}\cdot{\bf1}_{\Gamma}}(x)=\widehat{(-1)^{f}\cdot{\bf1}_{\Gamma}}(x\oplus
a)$. Consequently, $h$ is a $1$-plateaued function by the
definition.
\end{proof}

\begin{proposition}\label{cor:Sar}
Suppose that $f$ and $g$ are Boolean functions in $n$ variables. For
any subspace $\Gamma\subset \mathbb{F}^n$ if
$W_f|_\Gamma=W_g|_\Gamma$ then $\sum\limits_{x\in
z\oplus\Gamma^\perp}(-1)^{f(x)}=\sum\limits_{x\in
z\oplus\Gamma^\perp}(-1)^{g(x)}$ for any $z\in \mathbb{F}^n$.
\end{proposition}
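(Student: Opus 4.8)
The plan is to use the key identity (\ref{equpperbent10}), which expresses a sum of a function's values over a coset $z\oplus\Gamma^\perp$ in terms of the restriction of its Fourier transform to the dual subspace $\Gamma$. Applying this identity to $H=(-1)^f$ and then to $H=(-1)^g$ gives, for each $z$,
\[
\sum_{x\in z\oplus\Gamma^\perp}(-1)^{f(x)}
=2^{-\dim\Gamma}\sum_{y\in\Gamma}W_f(y)(-1)^{\langle y,z\rangle},
\qquad
\sum_{x\in z\oplus\Gamma^\perp}(-1)^{g(x)}
=2^{-\dim\Gamma}\sum_{y\in\Gamma}W_g(y)(-1)^{\langle y,z\rangle},
\]
since $\widehat{(-1)^f}=W_f$ by definition of the Walsh--Hadamard transform. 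The right-hand sides depend on $f$ and $g$ only through $W_f|_\Gamma$ and $W_g|_\Gamma$.

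First I would note that (\ref{equpperbent10}) is stated with $(-1)^{y\oplus a}$, which I read as the character value $(-1)^{\langle y,a\rangle}$; I would substitute $a=z$ to get the two displayed formulas. Then the hypothesis $W_f|_\Gamma=W_g|_\Gamma$ means $W_f(y)=W_g(y)$ for every $y\in\Gamma$, so the two summations over $y\in\Gamma$ are term-by-term identical: each summand $W_f(y)(-1)^{\langle y,z\rangle}$ equals $W_g(y)(-1)^{\langle y,z\rangle}$. Hence the two coset sums coincide for every $z$, which is exactly the claim.

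The argument is essentially a direct citation of the coset-sum identity, so there is no serious obstacle; it amounts to observing that the coset sum is a \emph{linear functional of the Fourier transform restricted to $\Gamma$}. The only point needing a little care is bookkeeping: making sure the substitution $a=z$ is legitimate (it is, since $a$ ranges over all of $\mathbb{F}^n$ in (\ref{equpperbent10})) and that the coset $z\oplus\Gamma^\perp$ matches on both sides. Since nothing about the inner sum uses any property of $f$ or $g$ beyond the values $W_f(y),W_g(y)$ for $y\in\Gamma$, equality of those values on $\Gamma$ forces equality of the coset sums, completing the proof.
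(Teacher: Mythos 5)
Your proof is correct and follows essentially the same route as the paper: the paper applies identity (\ref{equpperbent1}) to conclude $(-1)^f*{\bf1}_{\Gamma^\perp}=(-1)^g*{\bf1}_{\Gamma^\perp}$ and evaluates at $z$, while you cite the already-evaluated pointwise form (\ref{equpperbent10}), which the paper itself derives from (\ref{equpperbent1}) by substitution. Your reading of the typo $(-1)^{y\oplus a}$ as the character value $(-1)^{\langle y,a\rangle}$ is the intended one, so the two arguments are the same up to this bookkeeping.
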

\begin{proof} It follows from (\ref{equpperbent1}). Indeed it holds
$\widehat{(-1)^f}\cdot{\bf1}_{\Gamma}=\widehat{(-1)^g}\cdot{\bf1}_{\Gamma}$
by the conditions of the lemma. Then
$(-1)^f*{\bf1}_{\Gamma^\perp}=(-1)^g*{\bf1}_{\Gamma^\perp}$. It is
clear that $((-1)^f*{\bf1}_{\Gamma^\perp})(z)=\sum\limits_{x\in
z\oplus\Gamma^\perp}(-1)^{f(x)}$. This completes the
proof.\end{proof}

\section{M\"obius  transform}

Denote by $\mathrm{wt}(z)$  the number of units in $z\in
\mathbb{F}^n$. Every Boolean function $f$ can be represented in the
algebraic normal form:
\begin{equation}\label{equpperbent00}
 f(x_1,\dots,x_n)=\bigoplus\limits_{y\in
    \mathbb{F}^n}M[f](y)x_1^{y_1}\cdots x_n^{y_n},
\end{equation}
     where $x^0=1, x^1=x$, and
$M[f]:\mathbb{F}^n\rightarrow \mathbb{F}$ is the M\"obius transform
of $f$. It is well known that
\begin{equation}\label{equpperbent0}
M[f](y)=\bigoplus\limits_{x\in \Gamma_y}f(x)
\end{equation}
where $\Gamma_y=\{(x_1,\dots,x_n)\in \mathbb{F}^n : x_i=0 \
\mbox{\rm if} \ y_i=0\}$ is a subspace of $\mathbb{F}^n$. Note that
$M[M[f]]=f$ for each Boolean function (see \cite[Theorem
1]{Carlet}).
 The degree of this polynomial   is called  the algebraic
degree  of $f$.

Denote by $b(n,r)$ the cardinality of a ball $B_{n,r}$ with radius
$r$ in $\mathbb{F}^n$, i.e., $b(n,r)=|\{x\in \mathbb{F}^n :
\mathrm{wt}(x)\leq r\}|$.
 By properties of the M\"obius transform, the number
of $n$-variable Boolean functions $f$ such that $\mathrm{
deg}\,f\leq r$ is equal to $2^{b(n,r)}$.


\begin{lemma}\label{ball}
Suppose that $f$ and $g$ are $n$-variable Boolean functions  and\\
$\max\{{\rm deg}(f),{\rm deg}(g)\}\leq r$. If
$f|_{B_{n,r}}=g|_{B_{n,r}}$ then $f=g$.
\end{lemma}
\begin{proof}
 By the hypothesis  of the
lemma and (\ref{equpperbent00}), we have
    $M[f](y)=M[g](y)=0$ if $\mathrm{wt}(y)>r$. By (\ref{equpperbent0})
     for any $y\in \mathbb{F}^n$ such that $\mathrm{wt}(y)=r+1$,  we obtain
$$M[f](y)=\bigoplus\limits_{x\in \Gamma_y}f(x)=
f(y)\oplus\bigoplus\limits_{x\in \Gamma_y\cap B_{n,r}}f(x) $$
$$=f(y)\oplus\bigoplus\limits_{x\in \Gamma_y\cap
B_{n,r}}g(x)= f(y)\oplus M[g](y)\oplus g(y).$$ Therefore, $f(y)=
g(y)$ for any $y\in B_{n,r+1}$. By induction on weights
$\mathrm{wt}(y)$, we obtain that $f(y)= g(y)$ for all $y\in
\mathbb{F}^n$.
\end{proof}

\begin{lemma}[\cite{Carlet}, Theorem 2]\label{lemdeg} Let $f$ be an $n$-variable Boolean function. Suppose for every
$y\in \mathbb{F}^n$ it holds $\widehat{(-1)^f}(y)=2^km(y)$, where
$m(v)$ is integer. Then ${\rm deg}(f) \leq n-k+1$. \end{lemma}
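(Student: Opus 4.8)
The plan is to invert the Walsh--Hadamard transform and read off the $2$-adic structure of the resulting integer polynomial representation of $f$; the degree bound then falls out after reducing this representation modulo $2$.

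First I would apply the inversion formula $\widehat{\widehat{H}}=2^nH$ to $H=(-1)^f$. Since $\widehat{(-1)^f}=W_f$ and $W_f(y)=2^km(y)$ by hypothesis, this gives
$$(-1)^{f(x)}=2^{-n}\sum_{y\in\mathbb{F}^n}W_f(y)(-1)^{\langle x,y\rangle}=2^{k-n}\sum_{y\in\mathbb{F}^n}m(y)(-1)^{\langle x,y\rangle}.$$
I would then turn this into a polynomial in $x_1,\dots,x_n$ over $\mathbb{Z}$: substituting $(-1)^{f(x)}=1-2f(x)$ and expanding each character through
$$(-1)^{\langle x,y\rangle}=\prod_{i\in\mathrm{supp}(y)}(1-2x_i)=\sum_{z\preceq y}(-2)^{\mathrm{wt}(z)}\prod_{i\in\mathrm{supp}(z)}x_i,$$
and collecting monomials, I obtain a representation $f(x)=\sum_z\lambda(z)\prod_{i\in\mathrm{supp}(z)}x_i$ in which, for $z\neq0$,
$$\lambda(z)=-(-1)^{\mathrm{wt}(z)}\,2^{\,\mathrm{wt}(z)+k-n-1}\,c(z),\qquad c(z)=\sum_{y\succeq z}m(y)\in\mathbb{Z}.$$
Because the multilinear monomials form a basis of the functions on $\mathbb{F}^n$, these coefficients are uniquely determined, and since $f$ is $\{0,1\}$-valued they are integers.

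The decisive observation is the power of $2$ carried by each monomial. For every $z$ with $\mathrm{wt}(z)\geq n-k+2$ the exponent $\mathrm{wt}(z)+k-n-1$ is at least $1$, so $\lambda(z)$ is an even integer (possibly $0$). It remains to transfer this to the algebraic normal form. By (\ref{equpperbent0}) we have $M[f](z)=\bigoplus_{x\preceq z}f(x)\equiv\sum_{x\preceq z}f(x)\pmod2$, while M\"obius inversion of $f(x)=\sum_{z\preceq x}\lambda(z)$ gives $\lambda(z)=\sum_{x\preceq z}(-1)^{\mathrm{wt}(z)-\mathrm{wt}(x)}f(x)\equiv\sum_{x\preceq z}f(x)\pmod2$; hence $M[f](z)\equiv\lambda(z)\pmod2$. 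Consequently $M[f](z)=0$ whenever $\lambda(z)$ is even, i.e. whenever $\mathrm{wt}(z)\geq n-k+2$, which is exactly ${\rm deg}(f)\leq n-k+1$.

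The step I expect to be the main obstacle is conceptual rather than computational: one must not argue about the degree of the integer representation itself, which can be strictly larger than ${\rm deg}(f)$ (already $x_1\oplus x_2=x_1+x_2-2x_1x_2$ has integer degree $2$ but algebraic degree $1$). The whole argument therefore hinges on tracking divisibility by $2$ of the coefficients $\lambda(z)$ rather than their vanishing, and only afterwards passing modulo $2$ through the congruence $M[f](z)\equiv\lambda(z)\pmod2$. This is precisely where the factor $2^{\mathrm{wt}(z)+k-n-1}$ — the quantitative content of the hypothesis $2^k\mid W_f$ — does the work.
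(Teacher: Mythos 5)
Your proposal is correct: the inversion formula, the expansion of the characters into multilinear monomials, the divisibility $2^{\mathrm{wt}(z)+k-n-1}\mid\lambda(z)$ for $z\neq 0$, and the congruence $M[f](z)\equiv\lambda(z)\pmod 2$ (which follows from (\ref{equpperbent0}) and M\"obius inversion over $\mathbb{Z}$) all check out, and they yield exactly $\deg(f)\leq n-k+1$. The paper itself gives no proof of this lemma --- it imports it as \cite{Carlet}, Theorem 2 --- and your argument is essentially the standard one from that source, tracking $2$-adic divisibility of the numerical (integer multilinear) coefficients and only then reducing modulo $2$, so there is nothing to bridge.
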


\begin{corollary}[\cite{Carlet}, Proposition 96]\label{cordeg}
The algebraic degree of $n$-variable $s$-plateaued functions is not
greater than $\frac{n-s}{2}+1$.
\end{corollary}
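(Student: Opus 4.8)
The plan is to derive this directly from Lemma~\ref{lemdeg} by extracting the appropriate power of two from the Walsh--Hadamard spectrum of an $s$-plateaued function. Let $p$ be an $n$-variable $s$-plateaued function. By definition, its Walsh--Hadamard transform $W_p(y)=\widehat{(-1)^p}(y)$ takes only the values $0$ and $\pm 2^{(n+s)/2}$ for every $y\in\mathbb{F}^n$. The first step is to record that $(n+s)/2$ is a nonnegative integer: this is forced by the existence of such functions, since $W_p$ is integer-valued and hence $2^{(n+s)/2}$ must be an integer, meaning $n$ and $s$ share the same parity.

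Next I would set $k=\frac{n+s}{2}$ and verify the hypothesis of Lemma~\ref{lemdeg}. For each $y$, either $W_p(y)=0=2^k\cdot 0$ or $W_p(y)=\pm 2^{(n+s)/2}=2^k\cdot(\pm 1)$. In all cases we may write $\widehat{(-1)^p}(y)=2^k m(y)$ with $m(y)\in\{-1,0,1\}\subset\mathbb{Z}$, so the integrality requirement of the lemma is met with this value of $k$.

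Finally, applying Lemma~\ref{lemdeg} gives
$$\mathrm{deg}(p)\leq n-k+1=n-\frac{n+s}{2}+1=\frac{n-s}{2}+1,$$
which is the claimed bound. The argument is essentially immediate once the correct $k$ is identified; there is no genuine obstacle, the only point requiring a moment of care being the observation that $0$ is a multiple of $2^k$, so that the zeros in the spectrum do not interfere with applying the lemma at the chosen exponent $k=(n+s)/2$.
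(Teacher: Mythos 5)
Your proof is correct and matches the paper's intended argument exactly: the corollary is stated as an immediate consequence of Lemma~\ref{lemdeg}, obtained precisely by taking $k=\frac{n+s}{2}$ (an integer by parity, since $W_p$ is integer-valued) and writing $W_p(y)=2^k m(y)$ with $m(y)\in\{-1,0,1\}$. Your added care about the zeros of the spectrum and the parity of $n+s$ is sound and fills in the routine details the paper leaves implicit.
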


Note that algebraic degrees of bent ($0$-plateaued) functions is
$n/2$ at most (see e.g.\,\cite{Carlet}, \cite{CarMes}, \cite{Mes0}),
but for $1$-plateaued functions the upper bound $\frac{n+1}{2}$ is
sharp.

\begin{proposition}\label{bentpla19}
Let $f$ be an $n$-variable bent function. Then for any hyperplane
$\Gamma$ the algebraic degree of the Boolean function $h=\mathrm{
supp} (\widehat{(-1)^{f}\cdot{\bf1}_{\Gamma}})$ is not greater than
$n/2$.
\end{proposition}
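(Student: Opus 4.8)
The plan is to identify $h$ explicitly as a directional derivative (up to an additive constant) of the \emph{dual} bent function, and then to invoke the elementary fact that differentiation lowers the algebraic degree. Throughout, let $g$ denote the bent function dual to $f$, so that $\widehat{(-1)^f}=2^{n/2}(-1)^g$ and $\widehat{(-1)^g}=2^{n/2}(-1)^f$; in particular $g$ is bent, hence $\mathrm{deg}(g)\leq n/2$ by the degree bound quoted in the introduction. Since $\Gamma$ is a hyperplane, $\Gamma^\perp$ is one-dimensional, say $\Gamma^\perp=\{0,a\}$ with $a\neq0$.

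First I would reuse the computation from the proof of Proposition \ref{bentpla1}. Applying (\ref{equpperbent1}) with $H=(-1)^g$ and substituting $\widehat{(-1)^g}=2^{n/2}(-1)^f$ yields
$$\widehat{(-1)^{f}\cdot{\bf1}_{\Gamma}}=2^{\frac{n}{2}-1}\,(-1)^g*{\bf1}_{\Gamma^\perp}.$$
Evaluating the convolution pointwise gives $\big((-1)^g*{\bf1}_{\Gamma^\perp}\big)(x)=(-1)^{g(x)}+(-1)^{g(x\oplus a)}$, which equals $\pm2$ when $g(x)=g(x\oplus a)$ and $0$ otherwise. Hence the support satisfies $h(x)=1$ exactly when $g(x)\oplus g(x\oplus a)=0$, that is,
$$h=1\oplus D_a g,\qquad\text{where } D_a g(x):=g(x)\oplus g(x\oplus a).$$

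It then remains to bound $\mathrm{deg}(h)$; since adding the constant $1$ does not change the degree, $\mathrm{deg}(h)=\mathrm{deg}(D_a g)$. Writing the algebraic normal form of $g$, each monomial $\prod_{i\in S}x_i$ contributes $\prod_{i\in S}x_i\oplus\prod_{i\in S}(x_i\oplus a_i)$ to $D_a g$, and in this difference the top term $\prod_{i\in S}x_i$ cancels, leaving degree at most $|S|-1$. Therefore $\mathrm{deg}(D_a g)\leq\mathrm{deg}(g)-1\leq\frac{n}{2}-1\leq\frac{n}{2}$, as required.

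Essentially everything here is routine; the single idea that carries the proof is the identification of $h$ with the complement of a derivative of the dual, after which the degree-lowering property of $D_a$ finishes the argument immediately. The only point needing mild care is the hypothesis on $\Gamma$: the derivation above uses that $\Gamma$ is a linear hyperplane so that (\ref{equpperbent1}) applies. If one instead takes an affine hyperplane $\{x:\langle a,x\rangle=\epsilon\}$, the same conclusion follows from the shift identity $\widehat{(-1)^f(-1)^{\langle a,\cdot\rangle}}(x)=W_f(x\oplus a)$, which leads to $h=1\oplus\epsilon\oplus D_a g$ and leaves the degree bound unchanged.
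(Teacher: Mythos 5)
Your proof is correct and follows essentially the same route as the paper's: apply (\ref{equpperbent1}) to the dual $g$, compute the convolution with $\mathbf{1}_{\Gamma^\perp}=\mathbf{1}_{\{\bar 0,a\}}$ pointwise, and identify $h=1\oplus g\oplus g(\cdot\oplus a)$, from which the degree bound follows since $\deg g\leq n/2$. Your only additions — the degree-lowering observation giving the slightly sharper bound $\deg h\leq n/2-1$, and the remark handling affine hyperplanes — are correct but inessential refinements of the paper's argument, which simply bounds $\deg h\leq\deg g$.
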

\begin{proof}
 By  (\ref{equpperbent1}) we obtain that $h=\mathrm{
 supp}((-1)^g*{\bf1}_{\Gamma^\perp})$, where a bent
 function  $g$ is dual of $f$. Let $\Gamma^\perp=\{\bar{ 0}, a\}$. Then
 $(-1)^g*{\bf1}_{\Gamma^\perp}(x)=(-1)^{g(x)}+(-1)^{g(x\oplus a)}$.
 Consequently, $h(x)=g(x)\oplus g(x\oplus a)\oplus 1$. Thus,
 $\mathrm{deg}\,h\leq\mathrm{deg}\,g\leq\frac{n}{2}$.
 \end{proof}

\section{Subspace distribution}

We will use the following well-known criterium (see,
e.g.\,\cite[Proposition 96]{Carlet}) which is also true in the
nonbinary case (\cite[Theorem 2]{MOA}).
\begin{lemma}\label{lem:conv}
An $n$-variable Boolean function $f$ is  $s$-plateaued  if and only
if it holds\\ ${(-1)^f}*{(-1)^f}*{(-1)^f}=2^{n+s}{(-1)^f}$.
\end{lemma}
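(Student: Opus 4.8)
The plan is to diagonalize the triple convolution by passing to the Fourier transform, where convolution becomes pointwise multiplication. Write $F=(-1)^f$, so that $\widehat{F}=W_f$ by definition. The whole argument will rest on two facts already available in the excerpt: the convolution theorem $\widehat{H*G}=\widehat{H}\cdot\widehat{G}$, and the inversion formula $\widehat{(\widehat{H})}=2^nH$, which in particular shows that the Fourier transform is a bijection (hence injective) on real-valued functions on $\mathbb{F}^n$.

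First I would apply the Fourier transform to both sides of the claimed identity $F*F*F=2^{n+s}F$. By associativity of convolution and the convolution theorem, $\widehat{F*F*F}=\widehat{F}^{\,3}=W_f^3$, while on the right-hand side $\widehat{2^{n+s}F}=2^{n+s}W_f$. Because the Fourier transform is injective, the convolution identity holds if and only if the transformed identity $W_f(y)^3=2^{n+s}W_f(y)$ holds for every $y\in\mathbb{F}^n$.

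Next I would analyze this pointwise condition by factoring it as $W_f(y)\bigl(W_f(y)^2-2^{n+s}\bigr)=0$. Over the reals this vanishes exactly when $W_f(y)=0$ or $W_f(y)=\pm 2^{(n+s)/2}$, the three roots of the cubic $t^3-2^{n+s}t$. Requiring this dichotomy to hold for all $y$ is precisely the definition of an $s$-plateaued function. Chaining the two equivalences yields the statement in both directions.

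I do not expect any genuine obstacle here, since the argument reduces entirely to the convolution theorem and the invertibility of the Fourier transform. The only point needing a little care is to confirm that the factorization captures the value set $\{0,\pm 2^{(n+s)/2}\}$ \emph{exactly}—admitting the zero value and no spurious nonzero values—which is immediate from the three real roots of the cubic above.
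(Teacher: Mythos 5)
Your proof is correct, and since the paper states this lemma without proof (citing it as a well-known criterion from Carlet's book, Proposition 96, and from Mesnager--Ozbudak--Sinak--Cohen in the nonbinary case), your Fourier-diagonalization argument is precisely the standard proof behind those citations: the convolution theorem turns the identity into $W_f(y)^3=2^{n+s}W_f(y)$ pointwise, injectivity of the transform (from $\widehat{(\widehat{H})}=2^nH$) makes this an equivalence, and the three real roots of $t^3-2^{n+s}t$ are exactly the admissible values $\{0,\pm 2^{(n+s)/2}\}$ in the definition of an $s$-plateaued function. No gaps.
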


Consider an $n$-variable $s$-plateaued Boolean function $f$ and any
fixed $x\in \mathbb{F}^n$. There are $V={n\brack
2}_2=\frac{(2^n-1)(2^n-2)}{6}$ $2$-dimensional affine subspaces such
that any of them contains $x$. Let $S(x)$ be the number of subspaces
containing an odd number of zero values of $f$. By Lemma
\ref{lem:conv} we obtain

\begin{proposition}\label{distrib0}
For any fixed $x\in \mathbb{F}^n$, it holds $\frac{S(x)}{V}=\frac12
-\frac12\cdot \frac{2^{n+s}-3\cdot2^n+2}{(2^n-1)(2^n-2)}$.
\end{proposition}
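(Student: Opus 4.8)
The plan is to expand the threefold convolution from Lemma~\ref{lem:conv} pointwise at $x$ and to read $S(x)$ off the combinatorics of the resulting sum. Writing $F(z)=(-1)^{f(z)}$, a reindexing of the convolution gives
$$(F*F*F)(x)=\sum_{a\oplus b\oplus c=x}F(a)F(b)F(c),$$
the sum running over all $2^{2n}$ ordered triples $(a,b,c)$ with $a\oplus b\oplus c=x$; by Lemma~\ref{lem:conv} it equals $2^{n+s}F(x)$. I would split these triples into those with pairwise distinct entries and the remaining \emph{degenerate} ones, and match the former with the $2$-dimensional affine subspaces through $x$.

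The key observation is that for a triple with $a\oplus b\oplus c=x$ the equality $a=x$ is equivalent to $b=c$ (and cyclically), so the entries are pairwise distinct precisely when none of them equals $x$. Thus the nondegenerate triples are exactly the $6$ orderings of each unordered triple $\{p,q,r\}$ of distinct points, all different from $x$, with $p\oplus q\oplus r=x$; adjoining $x$ turns such a triple into a flat $A=\{x,p,q,r\}$ (four distinct points summing to $\bar{0}$), and conversely every flat through $x$ arises this way. This is the correspondence with the $V$ flats through $x$, each flat contributing $6$ ordered triples sharing the common value $F(p)F(q)F(r)$.

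The step to handle with care is the sign bookkeeping together with the count of degenerate triples. For a flat $A=\{x,p,q,r\}$ the product $F(x)F(p)F(q)F(r)$ equals $(-1)^{z(A)}$, where $z(A)$ is the number of zeros of $f$ on $A$, since each zero contributes $+1$, each one contributes $-1$, and the two counts have the same parity as $|A|=4$ is even. Multiplying by $F(x)$ gives $F(p)F(q)F(r)=F(x)(-1)^{z(A)}$, so the nondegenerate part of the sum is $6F(x)\sum_A(-1)^{z(A)}=6F(x)(V-2S(x))$, using that $S(x)$ flats have odd $z(A)$ and $V-S(x)$ have even $z(A)$. The degenerate triples are those in which some coordinate equals $x$ and the other two coincide; there are $3\cdot2^n-2$ of them after inclusion--exclusion, and each contributes $F(t)^2F(x)=F(x)$, for a total of $(3\cdot2^n-2)F(x)$.

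Collecting the two parts and cancelling $F(x)=\pm1$ from $(F*F*F)(x)=2^{n+s}F(x)$ yields $2^{n+s}=6(V-2S(x))+3\cdot2^n-2$. Since $6V=(2^n-1)(2^n-2)=2^{2n}-3\cdot2^n+2$, this collapses to $12S(x)=2^{2n}-2^{n+s}$, and dividing by $12V=2(2^n-1)(2^n-2)$ gives $\frac{S(x)}{V}=\frac{2^{2n}-2^{n+s}}{2(2^n-1)(2^n-2)}$, which equals the stated expression $\frac12-\frac12\cdot\frac{2^{n+s}-3\cdot2^n+2}{(2^n-1)(2^n-2)}$ once the two terms are combined over the common denominator. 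I expect the enumeration of the degenerate triples and the factor $6$ to be the only delicate points; once the bijection with flats and the parity identity are in place, the remainder is routine algebra.
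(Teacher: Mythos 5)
Your proof is correct and takes essentially the same approach as the paper: both expand the triple convolution from Lemma~\ref{lem:conv} pointwise at $x$, separate out the $3\cdot 2^n-2$ degenerate triples (each contributing $F(x)$) by inclusion--exclusion, and match the remaining ordered triples six-to-one with the $2$-dimensional affine subspaces through $x$, using the parity of the number of zeros on each flat to arrive at $2^{n+s}=6(V-2S(x))+3\cdot 2^n-2$. Your parametrization by ordered triples $(a,b,c)$ with $a\oplus b\oplus c=x$ is merely a relabeling of the paper's pairs $(y,z)$ with points $\{x,\,x\oplus y,\,x\oplus z,\,x\oplus y\oplus z\}$.
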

\begin{proof}
We can rewrite the formula from Lemma \ref{lem:conv} by the
following form
$$2^{n+s}(-1)^{f(x)}=\sum\limits_{z\in\mathbb{F}^n}\sum\limits_{y\in\mathbb{F}^n}(-1)^{f(y)\oplus f(y\oplus z)\oplus f(x\oplus z)}=\sum\limits_{z\in\mathbb{F}^n}\sum\limits_{y\in\mathbb{F}^n}(-1)^{f(x\oplus y)\oplus f(x\oplus y\oplus z)\oplus f(x\oplus z)}.$$

It is easy to see that if two elements of $\{x,x\oplus y, x\oplus z,
x\oplus y\oplus z\}$ are coincide then the two remaining elements
are also coincide. In this case $(-1)^{f(x)\oplus f(x\oplus y)\oplus
f(x\oplus z)\oplus f(x\oplus y\oplus z)}=1$. Let $U$ be the set of
such couples $\{y,z\}$ that $\{x,x\oplus y, x\oplus z, x\oplus
y\oplus z\}$ is a $2$-dimensional affine subspace.
 By the
inclusion-exclusion formula, we obtain that
$$\sum\limits_{\{y,z\} \in U}(-1)^{f(x)\oplus f(x\oplus
y)\oplus f(x\oplus z)\oplus f(x\oplus y\oplus z)}=2^{n+s}-3\cdot
2^n+2.$$ It is easy to see that every subspace $\{x,x\oplus y,
x\oplus z, x\oplus y\oplus z\}$ occurs $6$ times in the sum above.
Consequently, it holds equation $$\frac{2^{n+s}-3\cdot
2^n+2}{6}=(V-S(x))-S(x).$$ The extraction of $\frac{S(x)}{V}$ from
the last equation  completes the proof.
\end{proof}

Thus we have two equations: $\frac{S(x)}{V}=\frac12
+\frac{1}{2(2^{n-1}-1)}$ for every bent function and
$\frac{S(x)}{V}=\frac12 +\frac{1}{2(2^{n}-1)}$ for every
$1$-plateaued function. Note that for  bent functions $f$, $f(\bar{
0})=0$,  numbers of  linear subspaces such that contain $1$, $2$,
$3$ or $4$  zero values of $f$ do not depend on  $f$ (see
\cite{PA}).

We will use the following property of bent and plateaued functions.

\begin{proposition}[\cite{Carlet}, \cite{CarMes}, \cite{Mes0}]\label{splat12}
Let $f:{\mathbb{F}}^n\rightarrow {\mathbb{F}}$ be an $s$-plateaued
function, let $A:{\mathbb{F}}^n\rightarrow {\mathbb{F}}^n$ be a
non-degenerate affine transformation and let
$\ell:{\mathbb{F}}^n\rightarrow {\mathbb{F}}$ be an affine function.
Then $g=(f\circ A)\oplus\ell$ is an $s$-plateaued  function.
\end{proposition}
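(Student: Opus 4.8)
The plan is to prove the statement directly from the definition of the $s$-plateaued property, which depends only on the \emph{absolute values} of the Walsh--Hadamard transform: a function $f$ is $s$-plateaued exactly when $|W_f(y)|\in\{0,2^{(n+s)/2}\}$ for every $y\in\mathbb{F}^n$. Accordingly, the whole argument reduces to showing that the operation $f\mapsto (f\circ A)\oplus\ell$ transports the multiset $\{|W_f(y)|:y\in\mathbb{F}^n\}$ onto itself, up to a relabeling of the argument by a bijection. I would first write $A$ in the standard form $A(x)=Lx\oplus c$ with $L$ a non-degenerate (invertible) linear map and $c\in\mathbb{F}^n$, and write the affine function as $\ell(x)=\langle a,x\rangle\oplus\epsilon$ with $a\in\mathbb{F}^n$ and $\epsilon\in\mathbb{F}$.

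The key step is a direct computation of $W_g(y)$ from the definition. Expanding $g(x)=f(Lx\oplus c)\oplus\langle a,x\rangle\oplus\epsilon$ and pulling out the constant sign $(-1)^\epsilon$, one combines the two linear characters into $\langle x,a\oplus y\rangle$. Then I would perform the change of variable $u=Lx\oplus c$, which is a bijection of $\mathbb{F}^n$ because $L$ is non-degenerate, so that $x=L^{-1}(u\oplus c)$. Using the adjoint identity $\langle L^{-1}v,w\rangle=\langle v,(L^{-1})^{T}w\rangle$ over $\mathbb{F}$, the remaining inner product becomes $\langle u,M(a\oplus y)\rangle$ together with a constant contribution $\langle c,M(a\oplus y)\rangle$, where $M=(L^{-1})^{T}=(L^{T})^{-1}$ is again invertible. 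Collecting all constant terms into a single sign, this yields the identity
$$W_g(y)=(-1)^{\,\epsilon\oplus\langle c,M(a\oplus y)\rangle}\,W_f\bigl(M(a\oplus y)\bigr),$$
and in particular $|W_g(y)|=|W_f(M(a\oplus y))|$ for every $y$.

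To finish, I would observe that $y\mapsto M(a\oplus y)$ is a bijection of $\mathbb{F}^n$ (its linear part $M$ is invertible), so as $y$ ranges over $\mathbb{F}^n$ the argument $M(a\oplus y)$ ranges over all of $\mathbb{F}^n$. Hence the multiset $\{|W_g(y)|:y\in\mathbb{F}^n\}$ coincides with $\{|W_f(z)|:z\in\mathbb{F}^n\}$. Since $f$ is $s$-plateaued, every $|W_f(z)|$ lies in $\{0,2^{(n+s)/2}\}$, and therefore so does every $|W_g(y)|$; by the definition, $g$ is $s$-plateaued.

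I do not expect a genuine obstacle here: the statement is a bookkeeping computation rather than a conceptual difficulty. The only place requiring care is the change of variable, where one must correctly transport the inner product through $L^{-1}$ via its transpose $M=(L^{-1})^{T}$ and keep track of the constant $c$, which contributes only an overall sign and never affects the magnitude. Once the displayed identity $|W_g(y)|=|W_f(M(a\oplus y))|$ is established, the invariance of the $s$-plateaued property is immediate from the bijectivity of $y\mapsto M(a\oplus y)$.
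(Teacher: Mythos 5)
Your proof is correct: the identity $W_g(y)=(-1)^{\epsilon\oplus\langle c,M(a\oplus y)\rangle}W_f\bigl(M(a\oplus y)\bigr)$ with $M=(L^{T})^{-1}$ is exactly right, and the $s$-plateaued property of $g$ follows pointwise from it (the bijectivity of $y\mapsto M(a\oplus y)$ is a pleasant extra, not even strictly needed). The paper states this proposition without proof, citing it as a known fact from the literature, and your computation is precisely the standard argument given in those references, so there is nothing to compare beyond noting agreement.
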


The functions $f$ and $g$ satisfied the conditions of Proposition
\ref{splat12} are called EA-equivalent. It is easy to see that the
cardinality of any equivalence class is not greater than
$a_n=2^{n^2+n+1}(1+o(1))$. Note that two EA-equivalent functions $f$
and $g$ have the same algebraic degree as ${\rm deg}(f)>1$.

There are eight  $2$-variable Boolean functions such that take value
$0$ even times.  All of them are affine. Six of them take value $0$
two times and the other take value $0$ four or zero times. Consider
a $2$-dimensional affine subspace $\Gamma$ and an $n$-variable
Boolean function $g$. Let $g$ take value $0$ even times on $\Gamma$.
It is easy to see that $3/4$ among functions of the  set $\{g\oplus
\ell : \ell\ \mbox{\rm is an affine function}\}$  take value $0$ two
times and the other take value $0$ four or zero times. Consequently,
from Propositions \ref{distrib0} and  \ref{splat12} we deduced:

\begin{corollary}\label{distrib00}

Let $\Gamma$ be a $2$-dimensional face, i.e, axes-aligned plane
which can be obtained by fixed all with the exception of two
coordinates, and let $f:{\mathbb{F}}^n\rightarrow {\mathbb{F}}$ be
an $s$-plateaued function. There exists   a non-degenerate affine
transformation $A$ and an affine function $\ell$ such that the
$s$-plateaued  function $g=(f\circ A)\oplus\ell$ satisfies the
following conditions.

{\rm (a)} The part of faces  $\Gamma\oplus y$, $y\in
{\mathbb{F}}^n$, that contain an odd number of zero values of $g$,
is less than $\frac12$ if $s>1$ and less than
$\frac12+\frac{1}{2^n}$ if $s=1$.

{\rm (b)} Among the faces  $\Gamma\oplus y$, $y\in
B_{n,r}\subset{\mathbb{F}}^n$,
 that contain an even number of zero values of $g$, not less than
one fourth part contain four or zero values $0$.
\end{corollary}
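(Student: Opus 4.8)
The plan is to separate the two claims according to how the two ingredients of the equivalence $g=(f\circ A)\oplus\ell$ act on a $2$-face. The basic observation is that adding an affine function cannot change the parity of the number of zeros on any $2$-dimensional affine subspace: the restriction of $\ell$ to such a subspace is an affine function on a two-dimensional space, so it equals $1$ at an even number (zero, two, or four) of its four points, and passing from $g$ to $g\oplus\ell$ therefore flips an even number of values. Hence claim (a) is insensitive to $\ell$ and will be secured through the choice of $A$ alone, whereas $\ell$ will be reserved for claim (b); by Proposition \ref{splat12} the resulting $g$ is again $s$-plateaued.

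For (a) I would replace the fixed face $\Gamma$ by a variable direction. Writing $A(x)=Mx+b$ with $M$ invertible and $L=\mathrm{span}(\Gamma)$, the map $A$ sends the faces $\Gamma\oplus y$ onto the cosets of the $2$-dimensional subspace $K=ML$, and the number of zeros of $g$ on $\Gamma\oplus y$ equals the number of zeros of $f$ on the image coset. Let $\rho_K$ be the fraction of the $2^{n-2}$ cosets of $K$ on which $f$ has an odd number of zeros. A count of point--subspace incidences shows that the global fraction of $2$-dimensional affine subspaces carrying an odd number of zeros equals the per-point fraction $S(x)/V$ of Proposition \ref{distrib0}, and this global fraction is exactly the average of $\rho_K$ over all $V={n\brack 2}_2$ directions $K$. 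Since every $2$-dimensional subspace is $ML$ for a suitable invertible $M$, I may pick $A$ so that $\rho_K$ is minimal, hence at most this average. The evaluation following Proposition \ref{distrib0} gives the average as $\frac12$ minus a positive quantity when $s>1$, and as $\frac12+\frac1{2(2^n-1)}<\frac12+\frac1{2^n}$ when $s=1$, which are precisely the two bounds of (a).

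For (b) I would freeze the $A$ just chosen and average over $\ell$. Let $E$ be the set of faces $\Gamma\oplus y$ with $y\in B_{n,r}$ on which $g$ (equivalently $f\circ A$, by the parity remark) has an even number of zeros; this set does not depend on $\ell$. On each such face $f\circ A$ restricts to an affine function on a two-dimensional space, and so does its sum with $\ell$. Drawing $\ell$ uniformly among all affine functions makes its restriction to a fixed face uniform over the eight affine functions there; among those eight, two are constant (four or zero zeros) and six are balanced (two zeros), so for every face in $E$ the probability that $g$ has four or zero zeros is $\frac14$. By linearity of expectation the expected number of such faces is $\frac14|E|$, so some affine $\ell$ attains at least $\frac14|E|$, which is (b). Because $\ell$ leaves (a) untouched, this single pair $(A,\ell)$ fulfils both parts.

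The one step that repays care is the averaging identity in (a): one must check that averaging the direction-wise odd-parity fractions $\rho_K$ returns exactly the constant value $S(x)/V$ of Proposition \ref{distrib0}, which amounts to the double count that each $2$-dimensional affine subspace has four points while each point lies in $V$ of them. The remaining ingredients---the parity invariance under $\ell$, the six-versus-two split of the affine functions on $\mathbb{F}^2$, and the passage from expectation to existence---are routine.
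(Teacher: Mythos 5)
Your proof is correct and follows essentially the same route as the paper: condition (a) comes from averaging the odd-parity fraction over the $V$ directions (the paper phrases the identical double count as a contradiction with Proposition~\ref{distrib0}), and condition (b) from the parity invariance under adding affine $\ell$ plus the $6$-versus-$2$ split of the eight affine functions on a $2$-face. The only divergence is that you take the expectation over $\ell$ directly on the faces indexed by $B_{n,r}$, which neatly bypasses the paper's extra recentering step (averaging over ball centers and replacing $A$ by its composition with a translation by $e$) --- a mild streamlining of the same method, not a different one.
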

\begin{proof}
Firstly, we can find $A$ to provide condition (a). Let $s>1$.
Suppose that  the fraction of faces $A^{-1}(\Gamma\oplus y)$, $y\in
{\mathbb{F}}^n$,  containing  an odd number of zero values of $f$,
is not less than $\frac12$ for every non-degenerate affine
transformation $A$. Then  at least  half of $2$-dimensional affine
subspaces contain odd numbers of zero values of $f$. It is
contradict to Proposition \ref{distrib0}. Therefore, we can fixed a
non-degenerate affine transformation $A$ such that $g= f\circ A$
satisfies condition (a). The case $s=1$ is similar.

 Secondly, we can find $\ell$ to satisfy
condition (b). Indeed, we can choose $\ell$ to provide (b), since as
mentioned above, this distribution is on the average for all $\ell$.
By adding any affine function we save the parity of the number of
zero values of $g$ on every $2$-dimensional affine subspace. So, we
preserve condition (a). Consider the distribution of even numbers of
zero values of $g$ on the faces  $\Gamma\oplus y$, $y\in
{\mathbb{F}}^n$. It is easy to see that the average distribution
over balls with fixed radius $r$ and  centers $y\in {\mathbb{F}}^n$
is equal to the distribution over the Boolean hypercube. Then there
exists a ball with center $e\in {\mathbb{F}}^n$ such that $g=
(f\circ A)\oplus\ell$ has the same or better distribution on the
ball with center $e$. Then we can exchange $A$ to $A\oplus e$ to
provide the required distribution on $B_{n,r}$.
\end{proof}

Note that a random Boolean function has the required distribution of
zero values in a $2$-dimensional face, i.e., zero or four $0$s with
probability $\frac18$, one or three $0$s with probability $\frac12$,
two $0$s with probability $\frac38$.

Let $p_0$ be a probability of an even number of zero values in a
$2$-dimensional face and let $p_1$
 be a probability of an odd number of zero values in a
$2$-dimensional face. Moreover,  $p'_0$ is the  probability of two
zero values in a $2$-dimensional face and $p'_0\leq 3p_0/4$. How
many bits on average we need to count four values $(-1)^{g(x)}$ in a
$2$-dimensional face  $\Gamma\oplus y$ from their sum? We use the
following simple proposition.

\begin{proposition}\label{dist33}
Let $M$ be a set of words with length $n$ and let $k_i>0$ be a
number of different symbols in $i$th place in every word from $M$.
If $p_m=|\{i\in \{1,\dots,n\} : k_i=m\}|/n$ then
$$\frac{\log_2|M|}{n}= \sum_mp_m\log_2m.$$
\end{proposition}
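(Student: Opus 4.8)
The plan is to reduce the identity to the elementary count $|M|=\prod_{i=1}^n k_i$ and then take logarithms. First I would make the hypothesis precise: in each place $i$ the coordinate of a word ranges over exactly $k_i$ distinct symbols as the word runs through $M$, and these choices are made independently across the $n$ places, so that every combination of admissible symbols occurs. Under this reading a word of $M$ is specified by choosing one of $k_i$ symbols in each place $i$, whence $|M|=\prod_{i=1}^n k_i$.

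Next I would take base-$2$ logarithms, turning the product into a sum, $\log_2|M|=\sum_{i=1}^n\log_2 k_i$, and then regroup the summands according to the common value $m$ of $k_i$. The indices $i$ with $k_i=m$ contribute $|\{i:k_i=m\}|\log_2 m$, so $\sum_{i=1}^n\log_2 k_i=\sum_m|\{i:k_i=m\}|\log_2 m$. Substituting the definition $p_m=|\{i:k_i=m\}|/n$ gives $\log_2|M|=n\sum_m p_m\log_2 m$, and dividing by $n$ yields the asserted formula.

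The only real obstacle is conceptual rather than computational: one must justify the factorization $|M|=\prod_i k_i$, i.e.\ that the admissible symbols in different places combine freely (without this the identity fails, as a set of two words agreeing nowhere already shows). Once this independence across places is granted the remaining manipulations are routine. In the intended application the places are the $2$-dimensional faces $\Gamma\oplus y$ partitioning the domain, and $k_i$ counts the sign patterns of $(-1)^{g(x)}$ on a face that are consistent with its prescribed sum; because distinct faces are disjoint, the per-face data may indeed be chosen independently, which is exactly the product structure required here.
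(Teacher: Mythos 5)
Your proof is correct and takes essentially the same route as the paper's: the paper likewise asserts $|M|=\prod_{i=1}^n k_i$, rewrites it as $\prod_m m^{p_m n}$, and takes logarithms. Your added caveat that the factorization requires the symbols in different places to combine freely is a fair point the paper absorbs into the phrase ``by the definition,'' and your product-structure reading matches the intended application to disjoint $2$-dimensional faces.
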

\begin{proof}
By the definition, $|M|=\prod^n_{i=1}k_i$. Rearranging the factors,
we obtain that $|M|=\prod_m m^{p_mn}$. Taking the logarithm of both
sides of the previous equality  we deduce the required equality.
\end{proof}

Consequently, to find all  values of function in a $2$-dimensional
face in the case of two zero values we need $\log_2 6$ bits, in the
case of odd zero value we need $2$ bits, in the case of $0$ or $4$
zero values we do not need extra bits.
 Therefore,  under conditions (a) and (b) from  Corollary
\ref{distrib00}, it is sufficient  $p'_0\log_2 6+2p_1\leq
1+\frac{3}{8}\log_2 6 =\alpha\approx 1.969$  (or
$1+\frac{1}{2^{n-1}}+\frac{3}{8}\log_2 6=\alpha_n$ if $s=1$) bits on
average for finding four values $(-1)^{g(x)}$ in a $2$-dimensional
face $\Gamma\oplus y$ from their sum. It is easy to see that
$\alpha_n\rightarrow\alpha$ as $n\rightarrow\infty$. So, we obtain
the following statement from Corollary \ref{distrib00} and
Proposition \ref{dist33}.

\begin{corollary}\label{distrib001}
For every $n$-variable $s$-plateaued function there exists an
EA-equivalent function $g$ and $2$-dimensional face $\Gamma$ such
that if we know sums of values of $(-1)^g$ on all $\Gamma\oplus y$,
$y\in B_{n,r}$, then it is sufficient $\alpha b(n,r)$ (or $\alpha_n
b(n,r)$ if $s=1$) extra bits to  identify $g$ on $B_{n,r}$.
\end{corollary}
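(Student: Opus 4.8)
The plan is to combine the per-face encoding cost recorded just above with the distributional guarantees of Corollary \ref{distrib00}. First I would apply Corollary \ref{distrib00} to the given $s$-plateaued function $f$ to fix an EA-equivalent function $g=(f\circ A)\oplus\ell$ and a $2$-dimensional face direction $\Gamma$ for which the faces $\Gamma\oplus y$ with $y\in B_{n,r}$ satisfy conditions (a) and (b). Writing $p_1$ for the fraction of these faces carrying an odd number of zeros of $g$ and $p'_0$ for the fraction carrying exactly two zeros, this gives $p_1<\frac12$ (respectively $p_1<\frac12+\frac{1}{2^n}$ when $s=1$) and $p'_0\le\frac34 p_0=\frac34(1-p_1)$.

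Next I would pin down the cost of describing $g$ on a single face once its sum $\sum_{x\in\Gamma\oplus y}(-1)^{g(x)}$ is known. This sum equals $4-2k$, where $k$ is the number of points of the face on which $g$ vanishes, so the sum determines $k$; the configurations consistent with a prescribed $k$ number $\binom{4}{k}$, that is $1$ for $k\in\{0,4\}$, $4$ for $k\in\{1,3\}$, and $6$ for $k=2$. Hence, by Proposition \ref{dist33} (equivalently, by taking the base-$2$ logarithm of the number of admissible configurations), such a face needs $0$, $2$, or $\log_2 6$ extra bits, respectively. Averaging over the $b(n,r)$ faces indexed by $y\in B_{n,r}$, the mean number of extra bits per face is exactly $p'_0\log_2 6+2p_1$.

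I would then bound this mean. Substituting $p'_0\le\frac34(1-p_1)$ gives $p'_0\log_2 6+2p_1\le \frac34\log_2 6+p_1\bigl(2-\frac34\log_2 6\bigr)$, and since $2-\frac34\log_2 6>0$ the right-hand side increases in $p_1$; plugging in $p_1<\frac12$ yields $p'_0\log_2 6+2p_1<1+\frac38\log_2 6=\alpha$. The near-bent case $s=1$ is identical save for the weaker bound $p_1<\frac12+\frac{1}{2^n}$, which replaces $\alpha$ by $\alpha_n$ (with $\alpha_n\to\alpha$). Consequently the total over all indices, $\sum_{y\in B_{n,r}}(\text{bits for }\Gamma\oplus y)$, is at most $\alpha\, b(n,r)$ (respectively $\alpha_n\, b(n,r)$).

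Finally I would assemble the reconstruction. The faces $\Gamma\oplus y$, $y\in B_{n,r}$, cover $B_{n,r}$ because $y\in\Gamma\oplus y$, so recovering $g$ on each of these faces determines $g$ on $B_{n,r}$; given the sums (assumed known) together with the per-face configuration bits, each face is decoded independently. Since every distinct coset occurring among the $\Gamma\oplus y$ is counted at least once in the indexed sum of the preceding paragraph and all per-face costs are nonnegative, the genuine reconstruction cost over distinct cosets is dominated by that sum, hence at most $\alpha\, b(n,r)$ (respectively $\alpha_n\, b(n,r)$), as claimed. I expect the one point needing care to be the averaging step: checking that the constraint $p'_0\le\frac34(1-p_1)$ together with monotonicity in $p_1$ forces precisely the constant $\alpha$, and carrying the $2^{-n}$ correction through the case $s=1$; the multiplicity bookkeeping between the indexed family $\{\Gamma\oplus y\}$ and the set of distinct cosets must also be handled so that the reconstruction cost is genuinely dominated by the indexed sum.
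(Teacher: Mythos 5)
Your proposal is correct and follows essentially the same route as the paper: invoke Corollary \ref{distrib00} to fix $g$ and $\Gamma$ with conditions (a) and (b) on the ball, compute the per-face cost $0$, $2$, or $\log_2 6$ bits via $\binom{4}{k}$ and Proposition \ref{dist33}, and bound the average by $p'_0\log_2 6+2p_1\leq\alpha$ (respectively $\alpha_n$). In fact you make explicit two points the paper glosses over --- the monotonicity check $2-\frac34\log_2 6>0$ needed to see that $p_1=\frac12$, $p'_0=\frac38$ is the worst case, and the multiplicity bookkeeping between indices $y\in B_{n,r}$ and distinct cosets $\Gamma\oplus y$ --- both of which are handled correctly.
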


\section{Main results}

In the previous section we proved that in every EA-equivalence class
there exists an $s$-plateaued function $f$ satisfying the conditions
of Corollary \ref{distrib00}. Now we estimate the number of bits
sufficient to determine $f$.
 We will use the following combinatorial version of Shannon's
source coding theorem.

\begin{proposition}[see e.g. \cite{Kri}]
Let $M_n$ be a set of equally composed words with length $n$ over
alphabet $\mathcal{A}$ and let $p_i>0$ be a frequency of $i$th
symbol of $\mathcal{A}$ in every word from $M_n$. Here
$\sum\limits_{i=1}^{|\mathcal{A}|}p_i=1$. Then
$$\frac{\log_2|M_n|}{n}=\sum\limits_{i=1}^{|\mathcal{A}|}p_i\log_2\frac{1}{p_i}+\varepsilon(n,M_n),$$
where  $\sup_{M_n}|\varepsilon(n,M_n)|\rightarrow 0$ as
$n\rightarrow\infty$.
\end{proposition}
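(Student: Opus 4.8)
The plan is to recognize $M_n$ as a full \emph{type class} and apply the method of types from information theory. First I would observe that, since every word of $M_n$ has the same composition, writing $n_i=p_in$ for the number of occurrences of the $i$-th symbol, the set $M_n$ consists of length-$n$ words with one fixed composition; taking $M_n$ to be the full class of such words (the only reading under which the asserted equality can hold, as a proper subclass could make $\log_2|M_n|/n$ anything between $0$ and the entropy) gives
$$|M_n|=\binom{n}{n_1,\dots,n_{|\mathcal{A}|}}=\frac{n!}{n_1!\cdots n_{|\mathcal{A}|}!}.$$
The target quantity $\sum_i p_i\log_2\frac{1}{p_i}$ is exactly the Shannon entropy $H(P)$ of $P=(p_1,\dots,p_{|\mathcal{A}|})$, so I would aim to prove the two-sided estimate
$$\frac{1}{(n+1)^{|\mathcal{A}|}}\,2^{nH(P)}\le |M_n|\le 2^{nH(P)}.$$

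For the upper bound I would equip $\mathcal{A}^n$ with the product distribution $P^n$ induced by $P$ and note that every word $w\in M_n$ satisfies $P^n(w)=\prod_i p_i^{n_i}=2^{-nH(P)}$ \emph{exactly}; since these masses sum to at most $1$, we get $|M_n|\cdot 2^{-nH(P)}\le 1$. For the lower bound I would invoke the standard fact that among all type classes of length-$n$ words, the class $M_n$ of $P$ carries the largest $P^n$-mass, combined with the crude count that there are at most $(n+1)^{|\mathcal{A}|}$ distinct compositions; hence $P^n(M_n)\ge (n+1)^{-|\mathcal{A}|}$, which rearranges to the claimed lower bound.

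Taking $\log_2$, dividing by $n$, and combining the two inequalities then yields
$$-\frac{|\mathcal{A}|\log_2(n+1)}{n}\le \frac{\log_2|M_n|}{n}-H(P)\le 0,$$
so I would set $\varepsilon(n,M_n)=\frac{\log_2|M_n|}{n}-H(P)$ and read off $|\varepsilon(n,M_n)|\le \frac{|\mathcal{A}|\log_2(n+1)}{n}$. Because this bound is independent of the particular composition and $|\mathcal{A}|$ is a fixed constant, it is uniform over all admissible $M_n$ and tends to $0$, which is precisely $\sup_{M_n}|\varepsilon(n,M_n)|\to 0$.

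The step I expect to be the main obstacle is the lower bound, namely the claim that the type class of $P$ maximizes its own product measure, $P^n(M_n)\ge P^n(T)$ for every other length-$n$ type class $T$. Establishing this reduces to showing that each weighted ratio of multinomial coefficients is at least $1$, which follows from the elementary inequality $\frac{m!}{\ell!}\ge \ell^{\,m-\ell}$ applied coordinatewise. Alternatively, I could bypass this lemma and estimate $|M_n|$ directly via Stirling's formula; there the only care needed is that each factor $n_i!$ with $n_i\ge 1$ (guaranteed by $p_i>0$) contributes an error of order $\log n$, so the total error stays $O(|\mathcal{A}|\log n)$ and again vanishes after division by $n$, giving the same uniform conclusion.
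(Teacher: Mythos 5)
The paper does not prove this proposition at all --- it is quoted as a known combinatorial form of Shannon's source coding theorem with a pointer to Krichevsky's book \cite{Kri} --- so your task was to supply a proof from scratch, and the one you give is correct and complete. Your reading of the statement is the right one: ``a set of equally composed words'' must mean the \emph{full} type class of the composition $(n_1,\dots,n_{|\mathcal{A}|})$, $n_i=p_in$, since for a proper subset the equality fails, and this is indeed how the paper uses the proposition (counting all Walsh spectra with a prescribed fraction of nonzero values). The method-of-types sandwich $(n+1)^{-|\mathcal{A}|}2^{nH(P)}\le |M_n|\le 2^{nH(P)}$ is established correctly: the upper bound from the exact identity $P^n(w)=2^{-nH(P)}$ on the type class, and the lower bound from the maximality of a type's own product measure together with the count of at most $(n+1)^{|\mathcal{A}|}$ types; your reduction of the maximality lemma to $\frac{m!}{\ell!}\ge \ell^{\,m-\ell}$ applied coordinatewise is exactly the standard verification (the ratio telescopes to $n^{n\sum_i(q_i-p_i)}=1$). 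Crucially, your error bound $|\varepsilon(n,M_n)|\le |\mathcal{A}|\log_2(n+1)/n$ is independent of the composition, which is precisely what the $\sup_{M_n}$ in the statement demands --- a point worth being explicit about, since the frequencies $p_i$ are allowed to vary with $M_n$ and $n$, and a pointwise (composition-dependent) $o(1)$ would not suffice. Your fallback via Stirling's formula, with the observation that $p_i>0$ keeps each factorial error at $O(\log n)$, is equally valid and is closer in spirit to the classical treatment in sources like \cite{Kri}; the types argument buys you the clean uniform bound without any asymptotic expansion.
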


The sum $\sum_ip_i\log_2\frac{1}{p_i}$ is called Shannon's entropy
of source with probability $p_i$ of $i$th symbol. Denote by $\hbar$
Shannon's entropy function in the case of two symbols, i.e.,
$\hbar(p)=-p\log p- (1-p)\log (1-p)$ for $p\in (0,1)$.

Let $\mathcal{N}(n,s)$ be the binary logarithm of the number of
$n$-variable $s$-plateaued Boolean functions. The logarithm of the
number  of bent functions we denoted by $\mathcal{N}(n)$. Since the
Walsh--Hadamard transform is a bijection, $\mathcal{N}(n,s)$ is not
greater than the number of bits such that is sufficient to identify
$W_f$ for an $s$-plateaued function $f$. Therefore, by    Shannon's
source coding  theorem and Proposition \ref{bentpla00} we obtain
inequality:
\begin{equation}\label{eqSh}
 \mathcal{N}(n,s)\leq
2^n\left(\hbar\left(\frac{1}{2^s}\right)(1+o(1))+\frac{1}{2^s}\right).
\end{equation}

Let $\mathcal{N}_0(n,1)$ be the binary logarithm of the number of
$n$-variable $1$-plateaued Boolean functions  which are obtained by
the restriction of domain of  $(n+1)$-variable bent functions to
hyperplanes.

\begin{theorem}\label{number_plat}
{\rm (a)}   $\mathcal{N}(n,s)\leq (\alpha
b(n-2,\lceil\frac{n-s}{2}\rceil+1)+2^{n-2}(\hbar(\frac{1}{2^s})+\frac{1}{2^s}))(1+o(1))$
where $\alpha= 1+\frac{3}{8}\log_2 6$, $s>0$ is fixed  and
$n\rightarrow\infty$.

{\rm (b)}  $\mathcal{N}_0(n,1)\leq
b(n-2,\frac{n+1}{2})(\alpha+\frac{3}{2})(1+o(1))$ as
$n\rightarrow\infty$.
\end{theorem}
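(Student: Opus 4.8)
The plan is to treat both bounds as bit-counts for a storage scheme, exactly as prepared by Corollaries \ref{distrib00}--\ref{distrib001}: since the Walsh--Hadamard transform and (via the degree bound) the restriction to a ball are bijections, the logarithm of the number of functions is at most the number of bits used to encode a well-chosen representative. First I fix the degree bound $r=\lceil\frac{n-s}{2}\rceil+1$ from Corollary \ref{cordeg}, so that by Lemma \ref{ball} any $s$-plateaued $f$ is determined by $f|_{B_{n,r}}$; passing to the EA-equivalent representative $g$ of Corollary \ref{distrib00} costs only $\log_2 a_n=O(n^2)$ bits, negligible against $b(n-2,r)$. I fix a $2$-dimensional coordinate face $\Delta$ and encode $g$ through the $2^{n-2}$ coset sums $G(\bar y)=\sum_{x\in\Delta\oplus y}(-1)^{g(x)}$.

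For part (a) the encoding has two blocks. The first records all coset sums: the convolution theorem together with $\widehat{\mathbf{1}_\Delta}=4\,\mathbf{1}_{\Delta^\perp}$ gives $\widehat G=4\,W_g|_{\Delta^\perp}$, so recording $G$ is the same as recording the length-$2^{n-2}$ word $W_g|_{\Delta^\perp}$ over the alphabet $\{0,\pm2^{(n+s)/2}\}$. By Proposition \ref{bentpla00} the global density of nonzero Walsh values is $2^{-s}$, and averaging over the choice of $\Delta^\perp$ through the affine part of the reduction lets me assume this density is at most $2^{-s}$ on $\Delta^\perp$; since $\hbar(q)+q$ is increasing for $q\le\frac12$, Shannon's theorem bounds this block by $2^{n-2}(\hbar(2^{-s})+2^{-s})(1+o(1))$. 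The second block recovers $g$ inside each face from its coset sum: by the accounting of Corollary \ref{distrib001} this needs $\alpha$ bits per face, but only for faces meeting $B_{n,r}$. A coset $\Delta\oplus y$ meets $B_{n,r}$ precisely when its minimum-weight point has weight $\le r$, i.e.\ when $\bar y\in B_{n-2,r}$, so there are exactly $b(n-2,r)$ such faces; this is the refinement that turns the $b(n,r)$ of Corollary \ref{distrib001} into $b(n-2,r)$. Recovering $g$ on these faces yields $g|_{B_{n,r}}$, hence $g$ by Lemma \ref{ball}, proving (a).

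For part (b) only the first block improves. Let $h$ be the restriction of an $(n+1)$-variable bent function $F$ to a hyperplane, $\widetilde F$ its dual. Arguing as in Proposition \ref{bentpla1} one obtains $W_h=2^{(n-1)/2}\big((-1)^{\widetilde F(\cdot,0)}+(-1)^{\widetilde F(\cdot,1)}\big)$, so the normalized transform $w:=2^{-(n+1)/2}W_h=\tfrac12\big((-1)^{g_0}+(-1)^{g_1}\big)$, with $g_0,g_1$ the two restrictions of $\widetilde F$, is a real $\{0,\pm1\}$-valued function whose multilinear degree is at most $\frac{n+1}{2}$ because $\deg g_0,\deg g_1\le\frac{n+1}{2}$. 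The real analogue of Lemma \ref{ball} (triangularity of M\"obius inversion over $\mathbb R$) then shows that $w|_{\Delta^\perp}$ is determined by its values on $B_{n-2,\frac{n+1}{2}}$. Encoding just these $b(n-2,\frac{n+1}{2})$ ternary values, with support density $\frac12$ and one sign bit per nonzero, costs $(\hbar(\frac12)+\frac12)\,b(n-2,\frac{n+1}{2})=\frac32\,b(n-2,\frac{n+1}{2})$ bits by Shannon's theorem; from them I reconstruct $w|_{\Delta^\perp}$, hence all coset sums, and proceed as in (a). Adding the $\alpha\,b(n-2,\frac{n+1}{2})$ bits of the second block gives $(\alpha+\frac32)\,b(n-2,\frac{n+1}{2})(1+o(1))$.

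I expect the main obstacle to be the honest bookkeeping of the representative. In part (a) one must choose a single affine transformation and shift that simultaneously realise conditions (a),(b) of Corollary \ref{distrib00} and keep the Walsh-support density on $\Delta^\perp$ at most $2^{-s}$; each holds on average, so the work is to run these averaging arguments jointly rather than spoil one while securing another. In part (b) the decisive point is to store the ternary word rather than the Boolean support and signs separately: the symbols ``$0$'' on $\Delta^\perp\setminus\mathrm{supp}(w)$ are genuine data for the real polynomial $w$ and are precisely what allow $w$ to be reconstructed off the ball, whereas separate storage would lose them and cost an extra $\tfrac12\,b(n-2,\frac{n+1}{2})$ bits. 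Verifying that the real-degree bound on $w$ survives the EA-reduction, and that the ball-density stays $\le\frac12$ after the shift, are the remaining technical checks.
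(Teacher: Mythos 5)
Your part (a) is correct and is essentially the paper's own proof: encode $W_f$ on an $(n-2)$-dimensional coordinate subspace by Shannon coding at density $2^{-s}$ (Propositions \ref{bentpla00} and \ref{splat12}), then spend $\alpha$ bits per $2$-face meeting the ball of radius $r=\lceil\frac{n-s}{2}\rceil+1$ (Corollaries \ref{cordeg}, \ref{distrib00}, \ref{distrib001} and Lemma \ref{ball}), plus $O(n^2)$ bits for the EA-class; your explicit count of $b(n-2,r)$ faces is exactly the refinement the paper applies tacitly when it passes from the $b(n,r)$ of Corollary \ref{distrib001} to the $b(n-2,r)$ in the theorem.

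Part (b), however, contains a genuine gap. You claim that the real multilinear degree of $w=\frac12\bigl((-1)^{g_0}+(-1)^{g_1}\bigr)$ is at most $\frac{n+1}{2}$ ``because $\deg g_0,\deg g_1\le\frac{n+1}{2}$''. That inference is false: the real (multilinear) degree of $(-1)^g$ is the maximum weight of a point in the Walsh support of $g$, not the $\mathbb{F}_2$-degree of $g$. Already $g=x_1\oplus x_2$ has $\mathbb{F}_2$-degree $1$ while $(-1)^g=(1-2x_1)(1-2x_2)$ has real degree $2$. In your situation the failure is total: since $\widehat{\widehat{H}}=2^nH$, one has $\widehat{w}=2^{-(n+1)/2}\,\widehat{W_h}=2^{(n-1)/2}(-1)^{h}$, which has full support, so $w$ has real degree $n$. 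Consequently the ``real analogue of Lemma \ref{ball}'' does not apply, the values of $w$ on $B_{n-2,\frac{n+1}{2}}$ do not determine $w|_{\Delta^\perp}$ by any triangularity argument, and your ternary encoding does not reconstruct the coset sums off the ball --- which is precisely what the rest of your scheme needs. The paper's route avoids this trap: by Proposition \ref{bentpla19} it is the \emph{Boolean indicator of the support}, $S=g_0\oplus g_1\oplus 1$ (an $\mathbb{F}_2$-valued function), that has algebraic degree at most $\frac{n+1}{2}$, and Lemma \ref{ball} is applied over $\mathbb{F}_2$ to $S$; the $\frac32\,b\bigl(n-2,\frac{n+1}{2}\bigr)$ count then splits as $\hbar(\frac12)$ bits per ball point for the support plus $\frac12$ sign bits per ball point, as in (\ref{eqSh}). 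To repair your argument you must replace the real-degree step by this Boolean-degree statement; with that substitution your remark in the last paragraph --- that the zeros off the support are ``genuine data for the real polynomial $w$'' --- becomes moot, since no real polynomial identity of degree $\frac{n+1}{2}$ is available for $w$.
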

\begin{proof}
Let $f$ be an $n$-variable $s$-plateaued function. Consider an
$(n-2)$-dimensional face $\Gamma$. Without loss of generality (see
Propositions \ref{bentpla00} and \ref{splat12}) we admit that the
part of nonzero values of $W_f$ in $\Gamma$ is not greater than
$\frac{1}{2^s}$.

Case (a).  For every $s$-plateaued Boolean function $f$ it is
sufficient (by similar way as  in (\ref{eqSh}))
$2^{n-2}(\hbar(\frac{1}{2^s})+\frac{1}{2^s})(1+o(1))$ bits to
identify $W_f\cdot \mathbf{1}_\Gamma$.

Case (b). Suppose that a $1$-plateaued function $f$ is  obtained by
the restriction of domain of a bent function to a hyperplane. By
Proposition \ref{bentpla19}, an algebraic degree of the support $S$
of the Walsh--Hadamard transform of such $1$-plateaued function is
not greater than $\frac{n+1}{2}$. By Lemma \ref{ball}, it is
sufficient to identify $S$ only in a ball $B_{n-2, \frac{n+1}{2}}$.
Then we just need $(\hbar(\frac12) +\frac12)b(n-2,
\frac{n+1}{2})(1+o(1))=\frac{3}{2}b(n-2, \frac{n+1}{2})(1+o(1))$
bits to identify $W_f\cdot \mathbf{1}_\Gamma$ by (\ref{eqSh}).

The last part of the proof is the same for cases (a) and (b).

By (\ref{equpperbent10}), if we know $W_f\cdot \mathbf{1}_\Gamma$
then we can find  sums $\sum\limits_{x\in \Gamma^\perp\oplus a}
(-1)^f(x)$ for any $a\in \mathbb{F}^n$. By Corollary
\ref{distrib00}, we can choose an $s$-plateaued function such that
is  EA-equivalent to $f$ and has  the appropriate distribution of
these sums. By Corollary \ref{cordeg}, an algebraic degree of any
$n$-variable $s$-plateaued Boolean functions $f$ is not greater than
$r=\lceil\frac{n-s}{2}\rceil+1$. Consequently,  by Lemma
\ref{lemdeg} it is sufficient to recognize values of $f$ in a ball
of radius $r$. By Corollary \ref{distrib001}, there exists
$s$-plateaued function $f'$ from the same EA--equivalence class as
$f$ such that $\alpha b(n,r)$ bits is sufficient to recover $f'$ on
$\mathbb{F}^n$. If we know  a EA--equivalence class of the function
then    it is sufficient $\log_2 a_n=n^2+n+1=o(2^n)$ bits to
identify the function.
 Thus, if $W_f\cdot \mathbf{1}_\Gamma$ is given then we need $\alpha
b(n-2,\lceil\frac{n-s}{2}\rceil+1)(1+o(1))$ extra bits to identify
$(-1)^f$.
\end{proof}

\begin{corollary}\label{corplat100}
$\mathcal{N}_0(n,1)< 3.47\cdot2^{n-3}(1+o(1))$ as
$n\rightarrow\infty$.
\end{corollary}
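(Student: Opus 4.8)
The plan is to read the numerical bound off directly from Theorem~\ref{number_plat}~(b), so that the only substantive work is to estimate the ball cardinality $b(n-2,\frac{n+1}{2})$ and to evaluate the constant $\alpha+\frac32$. First I would pin down the constant. Since $\alpha=1+\frac38\log_2 6$, we have
$$\alpha+\frac32=\frac52+\frac38\log_2 6,$$
and substituting $\log_2 6=2.58496\ldots$ gives $\alpha+\frac32=3.4694\ldots<3.47$, which already explains where the claimed constant comes from and yields the \emph{strict} inequality.

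The main (though routine) step is the ball estimate. The point is that the radius $\frac{n+1}{2}$ exceeds half the dimension $n-2$ by only a bounded amount, so the ball captures asymptotically one half of the cube $\mathbb{F}^{n-2}$. Writing $m=n-2$, the radius equals $\frac{m+3}{2}=\frac m2+\frac32$, i.e.\ at most two layers above the middle. The lower half of the cube already has size $\sum_{i\le m/2}\binom mi=2^{m-1}(1+o(1))$, while each binomial coefficient near the middle satisfies $\binom{m}{\lceil m/2\rceil}=O\!\left(2^{m}/\sqrt m\right)=o(2^{m})$; adding the $O(1)$ extra middle layers therefore changes the count by only $o(2^{m})$. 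Hence
$$b\!\left(n-2,\tfrac{n+1}{2}\right)=2^{m-1}(1+o(1))=2^{n-3}(1+o(1)).$$

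Finally I would substitute these two facts into Theorem~\ref{number_plat}~(b):
$$\mathcal{N}_0(n,1)\le b\!\left(n-2,\tfrac{n+1}{2}\right)\Bigl(\alpha+\tfrac32\Bigr)(1+o(1))=\Bigl(\alpha+\tfrac32\Bigr)2^{n-3}(1+o(1))<3.47\cdot 2^{n-3}(1+o(1)),$$
which is exactly the asserted inequality (the two $(1+o(1))$ factors merge). The only place where care is needed is the ball estimate: one must check that the number of layers between $\frac m2$ and the actual radius stays bounded in $n$ (here it is the constant $\frac32$), so that their combined contribution is genuinely $o(2^{m})$ and does not disturb the leading factor $2^{n-3}$. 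Everything else is a direct specialization of the already-proven Theorem~\ref{number_plat}~(b).
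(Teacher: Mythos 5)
Your proposal is correct and matches the paper's (implicit) derivation: the corollary is stated without proof precisely because it follows from Theorem~\ref{number_plat}(b) by the two computations you carry out, namely $b\left(n-2,\frac{n+1}{2}\right)=2^{n-3}(1+o(1))$ (the radius exceeds half the dimension $n-2$ by a bounded amount, so the boundedly many extra middle layers contribute $O(2^{n-2}/\sqrt{n})=o(2^{n-2})$) and $\alpha+\frac32=\frac52+\frac38\log_2 6\approx 3.4694<3.47$. Your attention to the strictness of the inequality and to why the extra layers above the middle are negligible is exactly the right level of care.
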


The idea of the new upper bound on the number of the Boolean bent
functions is the  following. We consider a restriction of the domain
of a bent function into a hyperplane. This is a  $1$-plateaued
function and we can evaluate the number of such functions by Theorem
\ref{number_plat} (b). Then we evaluate the number of extra bits
witch we need to recover all values the bent function when we know
it values only on hyperplane. By Lemma \ref{ball} it is sufficient
to identify an $n$-variable bent function only on a ball with radius
$n/2$.

\begin{theorem}\label{number_bent}
$\mathcal{N}(n)\leq \mathcal{N}_0(n-1,1)+2^{n-3}(1+o(1))<
\frac{11}{32}2^n(1+o(1))$ as $n\rightarrow\infty$.
\end{theorem}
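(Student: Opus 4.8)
The strategy has two independent pieces that are then added. First, a bent function $f$ in $n$ variables restricted to a hyperplane $\Gamma$ gives, by Proposition~\ref{bentpla1}, a $1$-plateaued function in $n-1$ variables that arises precisely as such a restriction; so the number of possible restrictions is at most $2^{\mathcal{N}_0(n-1,1)}$. Second, once the values of $f$ on $\Gamma$ are fixed, I must count how many extra bits are needed to recover $f$ on all of $\mathbb{F}^n$. The claimed cost of this second piece is $2^{n-3}(1+o(1))$ bits.

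**The first piece is immediate from the definition of $\mathcal{N}_0$.** Fix a coordinate hyperplane $\Gamma \subset \mathbb{F}^n$ (say $x_n=0$). By Proposition~\ref{bentpla1}, the restriction $h = f|_\Gamma$, viewed as an $(n-1)$-variable function, is $1$-plateaued, and it is manifestly obtained as a hyperplane restriction of an $n$-variable bent function. Hence $h$ ranges over a set of size at most $2^{\mathcal{N}_0(n-1,1)}$, and the number of bits to specify $h$ is $\mathcal{N}_0(n-1,1)$. This accounts for the first summand.

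**For the second piece I would invoke Lemma~\ref{ball} together with the degree bound for bent functions.** Since $\deg f \leq n/2$, Lemma~\ref{ball} says that $f$ is determined by its values on the ball $B_{n,n/2}$. The values of $f$ on $\Gamma$ are already known from the first piece, so I only need the values of $f$ on $B_{n,n/2}\setminus\Gamma$, i.e. on the coset $x_n=1$ intersected with the ball. The number of such points is $b(n-1,n/2-1)$ (points of weight $\le n/2$ with $x_n=1$ correspond to points of weight $\le n/2-1$ in the remaining $n-1$ coordinates). The key asymptotic fact, already emphasized in the Introduction, is that the middle binomial layer is asymptotically negligible relative to $2^{n-1}$, so $b(n-1,n/2-1) = 2^{n-2}(1+o(1))$; but a further factor of $\tfrac12$ is needed to reach $2^{n-3}$. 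The main obstacle is justifying this extra halving. The idea I would pursue is that the off-hyperplane values are not free: using (\ref{equpperbent10}) with the $2$-element dual subspace $\Gamma^\perp$, the sums $\sum_{x\in a\oplus\Gamma^\perp}(-1)^{f(x)}$ are determined by $W_f\cdot\mathbf{1}_\Gamma$, i.e. by data already recorded in the first piece; this constrains each off-hyperplane value by the parity/sum condition on the two points $\{a,a\oplus(\text{the nonzero element of }\Gamma^\perp)\}$, effectively cutting the information content per point to (on average) one bit but in fact allowing the same $\tfrac34$-type savings exploited in Corollary~\ref{distrib00}.

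**The hard part, and the place where I expect the real work to lie, is getting the constant down from $2^{n-2}$ to $2^{n-3}$.** A naive count of the unknown off-hyperplane values in the ball gives $2^{n-2}(1+o(1))$ bits, which only yields $\mathcal{N}(n)\le \mathcal{N}_0(n-1,1)+2^{n-2}(1+o(1))$, too weak by a factor of two. To recover the extra $\tfrac12$ I would combine the linear constraints coming from (\ref{equpperbent10}) (each off-hyperplane value is linked to an on-hyperplane value through the known convolution sums) with a Shannon source-coding / entropy argument as in Corollary~\ref{distrib001}, and with the observation that $W_f$ takes only two values $\pm 2^{n/2}$ on a bent function so that the support data is maximally constrained. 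Finally, substituting Corollary~\ref{corplat100}, which gives $\mathcal{N}_0(n-1,1) < 3.47\cdot 2^{(n-1)-3}(1+o(1)) = \tfrac{3.47}{16}\,2^{n-1}(1+o(1))$, and adding $2^{n-3}=\tfrac14\cdot 2^{n-1}$, I would check that $\tfrac{3.47}{16}+\tfrac14 < \tfrac{11}{32}$ numerically, completing the bound $\mathcal{N}(n) < \tfrac{11}{32}2^n(1+o(1))$.
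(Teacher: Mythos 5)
Your outline has the right skeleton (hyperplane restriction counted by $\mathcal{N}_0(n-1,1)$, reduction to a ball of radius $n/2$ via Lemma~\ref{ball}, coset-sum constraints via (\ref{equpperbent10}), and the final numeric check $3.47/16+1/8<11/32$, which is correct), but the central step fails as stated. Your first piece records $h=f|_\Gamma$, while your second piece asserts that the sums $\sum_{x\in a\oplus\Gamma^\perp}(-1)^{f(x)}$ are ``determined by $W_f\cdot\mathbf{1}_\Gamma$, i.e.\ by data already recorded in the first piece.'' These are different objects: $W_f\cdot\mathbf{1}_\Gamma=2^{n/2}(-1)^{g}\cdot\mathbf{1}_\Gamma$, where $g$ is the \emph{dual} bent function, and knowing $f$ on $\Gamma$ gives you no direct access to $g$ on $\Gamma$ (each Walsh coefficient depends on all of $f$). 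The paper resolves this by recording the dual's restriction instead: $g\cdot\mathbf{1}_\Gamma$ is $1$-plateaued by Proposition~\ref{bentpla1} and is itself a hyperplane restriction of a bent function, so it costs $\mathcal{N}_0(n-1,1)$ bits; then (\ref{equpperbent10}) applied to this recorded data yields all pair sums $s(x)=(-1)^{f(x)}+(-1)^{f(x\oplus a)}$ with $\Gamma^\perp=\{\bar{0},a\}$. A symptom that your accounting is off: if you genuinely had both $f|_\Gamma$ and all the sums $s(x)$, you would need \emph{zero} extra bits, since $s(x)=0$ forces $f(x\oplus a)=f(x)\oplus1$ and $s(x)=\pm2$ forces $f(x\oplus a)=f(x)$.

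The halving you flag as the hard part is also achieved by a different mechanism than the one you propose. It does not come from the $\tfrac34$-type savings of Corollary~\ref{distrib00} (that device belongs to the proof of Theorem~\ref{number_plat}); it comes from the fact that every derivative $D_af$ of a bent function is balanced. Hence exactly half of the $2^{n-1}$ pairs $\{x,x\oplus a\}$ have $s(x)=\pm2$, and for those pairs both values of $f$ are recovered for free from the recorded data, while each zero-sum pair costs exactly one bit. Combining this with Lemma~\ref{ball} (since $\deg f\le n/2$, only a ball $B$ of radius $n/2$ matters) and an averaging/translation argument via Proposition~\ref{splat12} to choose $B$ so that at least half of its pairs are $\pm2$-pairs, the extra cost is at most $|B\cap\Gamma|/2=2^{n-3}(1+o(1))$, which is precisely the second summand. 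So the statement is provable along lines close to your sketch, but only after replacing $f|_\Gamma$ by $g\cdot\mathbf{1}_\Gamma$ as the recorded object and replacing the entropy heuristics by the balanced-derivative count; as written, your argument only yields $\mathcal{N}(n)\le\mathcal{N}_0(n-1,1)+2^{n-2}(1+o(1))$, which is too weak.
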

\begin{proof}
Let $f$ be an $n$-variable bent function and let $g$ be  dual of $f$
bent function. By Proposition \ref{bentpla1},
$g\cdot\mathbf{1}_\Gamma$ is an $(n-1)$-variable $1$-plateaued
function as  $\Gamma$ is a hyperplane. Presume that
$\Gamma^\perp=\{\bar{ 0}, a\}$.

Now we evaluate a number of extra bits which is sufficient  to
recover $f$ if $g\cdot\mathbf{1}_\Gamma$ is given. By
(\ref{equpperbent10}) we obtain that  sums
$s(x)=(-1)^{f(x)}+(-1)^{f(x+a)}$ are determined by
$g\cdot\mathbf{1}_\Gamma$. Since all derivatives of each bent
function are balanced (see e.g. \cite{Carlet}, Theorem 12), a half
of these sums $s(x)$ are equal to $\pm 2$ and the other half of
these sums are equal to $0$. In the first case we can extract
$(-1)^{f(x)}$ and $(-1)^{f(x+a)}$ from the sum. But in the second
case we need an additional  information to choose $(-1)^{f(x)}=1$
and $(-1)^{f(x+a)}=-1$ or vice versa. Denote by $S_1$ the set of
$x\in \Gamma$ such that $s(x)=\pm 2$.

By Lemma \ref{ball} and Proposition \ref{splat12}, we need to
identify values of $f$ only in some ball with radius $n/2$. It is
easy to see that we can  find a ball $B$ such that $|S_1\cap B\cap
\Gamma|\geq |B\cap \Gamma|/2$. Therefore, it is necessary not
greater than $|B\cap \Gamma|-|S_1\cap B\cap \Gamma| \leq |B\cap
\Gamma|/2=2^{n-3}(1+o(1))$ extra bits to recover $f$ from
$g\cdot\mathbf{1}_\Gamma$. Therefore, we establish that
$$\mathcal{N}(n)\leq \mathcal{N}_0(n-1,1)+2^{n-3}(1+o(1))$$ as
$n\rightarrow\infty$. By Theorem \ref{number_plat}(b), we obtain the
required inequality.
\end{proof}

\section*{Acknowledgements}

The author is grateful to S.~Avgustinovich and S.~Agievich for their
attention to this work and  useful discussions.

The research  has been carried out within the framework of a state
assignment of the Ministry of Education and Science of the Russian
Federation for the Institute of Mathematics of the Siberian Branch
of the Russian Academy of Sciences (project no. FWNF-2022-0017).

\end{document}

\bibitem{Kolsch}
Kolsch L., Polujan A.: Value distributions of perfect nonlinear
functions.  Combinatorica. {\bf 44}(2), 231--268 (2024).